\newtheorem{lemma}{Lemma}
\newcommand{\passage}[1]{\bigskip
                         {\bf #1.}
                         \nopagebreak}
\begin{document}

\title{Fact-nets: towards a mathematical framework for relational quantum mechanics}

\author{Pierre Martin-Dussaud}
\email{martindussaud@gmail.com}
\affiliation{Institute for Gravitation and the Cosmos, The Pennsylvania State University, University Park, Pennsylvania 16802, USA}
\affiliation{Basic Research Community for Physics e.V., Mariannenstra\ss e 89, Leipzig, Germany}
\author{Titouan Carette}
\email{titouan.carette@universite-paris-saclay.fr}
\affiliation{Basic Research Community for Physics e.V., Mariannenstra\ss e 89, Leipzig, Germany}
\affiliation{Université Paris-Saclay, Inria, CNRS, LMF, 91190, Gif-sur-Yvette, France}
\author{Jan Głowacki}
\email{glowacki@cft.edu.pl}
\affiliation{Basic Research Community for Physics e.V., Mariannenstra\ss e 89, Leipzig, Germany}
\affiliation{Center for Theoretical Physics of the Polish Academy of Sciences, Al. Lotników 32/46 02-668 Warsaw, Poland}
\author{Vaclav Zatloukal}
\email{zatlovac@gmail.com}
\affiliation{Faculty of Nuclear Sciences and Physical Engineering, Czech Technical University in Prague, \\
B\v{r}ehov\'{a} 7, 115 19 Praha 1, Czech Republic}
\author{Federico Zalamea}
\affiliation{Basic Research Community for Physics e.V., Mariannenstra\ss e 89, Leipzig, Germany}
\affiliation{EFREI Research Lab, EFREI Paris}

\date{ \small\today}

\begin{abstract}
The relational interpretation of quantum mechanics (RQM) has received a growing interest since its first formulation in 1996. Usually presented as an interpretational layer over the usual quantum mechanics formalism, it appears as a philosophical perspective without proper mathematical counterparts. This state of affairs has direct consequences on the scientific debate on RQM which still suffers from misunderstandings and imprecise statements. In an attempt to clarify those debates, the present paper proposes a radical reformulation of the mathematical framework of quantum mechanics which is relational from the start: fact-nets. The core idea is that all statements about the world, facts, are binary entities involving two systems that can be symmetrically thought of as observed and observer. We initiate a study of the fact-nets formalism and outline how it can shed new relational light on some familiar quantum features.
\end{abstract}

\maketitle

\section{Introduction}

The standard formulation of quantum mechanics relies heavily on the framework of Hilbert spaces. It is an effective piece of mathematics that encodes the physical notion of a pure state and provides ground for the superposition principle.

Historically, the introduction of the notion of a `state' understood as a wave-function came as a relief to the unease of matrix mechanics. Schrödinger recovered the results of the latter by solving a differential equation, rather than using linear algebra. Besides, the idea of a wave seemed to provide a useful image to understand quantum phenomena. It continued de Broglie's idea of waves of matter \cite{schrodinger1926c}.

However, this intuition breaks down whenever multiple particles are considered. Indeed, the wave-function is not a function over space anymore, but instead a function over the configuration space. This was already pointed out sadly by Schrödinger at the Solvay conference in 1927 (\cite{bacciagaluppi2006} p. 447). 

Another trouble is coming from the so-called measurement problem, for instance in the context of the Wigner's friend thought experiment. It can be stated as the incompatibility between the statements: 
\begin{enumerate}
    \item All physical systems are quantum.
    \item The wave-function is observer-independent.
    \item The evolution of an isolated system is unitary.
    \item A measurement is a collapse of the wave-function.
\end{enumerate}
The Copenhagen interpretation rejects the first, relational quantum mechanics (RQM) the second, collapse models the third and many-worlds the fourth statement.

Here we stick to a relational interpretation, which implies that the wave-function is only a useful bookkeeping device relative to an observer, which can be any other system. This stance is surprising for anyone who was taught quantum mechanics \textit{à la} Schrödinger, like in von Neumann's book where it is said
\begin{quote}
    \textit{``it is evident that everything which can be said about the state of a system must be derived from its wave function"} \cite{vonneumann2018}.
\end{quote}
It is said implicitly that the wave function is an absolute physical quantity, i.e. pertaining only to a system in itself, without any reference to an observer.
In this perspective, the wave-function is the new entity that gathers the properties of a system and the challenge consists in solving the Schrödinger equation, which describes the probabilistic evolution of these properties. 

Relational quantum mechanics points out the important role of the observer. The line of separation between the system and the observer is usually called the Heisenberg cut. Relational quantum mechanics insists on the idea that this cut can be located between any two systems. In other words, any system can be taken as a reference, and thus promoted to an observer. This departs from the usual use of the word `observer' which is often loaded with the idea that the system should be macroscopic. 

This stance shifts the focus from the system itself to the relation between an observer and the system. In the measurement problem, Wigner and his friend, after the friend has measured the system, do not attribute the same state to the system. So the state is not a property of the system itself, but a description of a relation between the system and another system called the observer. The relative state encodes information obtained through past interactions between the system and the observer. The interaction drives a `quantum event' or a `fact' that persists in the form of a correlation between the system and the observer. 

Since its initial formulation in 1996 \cite{rovelli1996b}, the relational interpretation has evolved. It is not yet a fixed dogma, but still a living flow of ideas stemming from the same initial corpus, and trying to merge into a consistent common vision of the world. Many questions certainly remain to be answered \cite{laudisa2017,pienaar2021,brukner2021,dibiagio2021a,stacey2021a,adlam2022}. The objections which have been formulated against it mainly arise from the unfortunate mixture between standard quantum mechanical conceptions and relational statements. This confusion can be partly explained by the use of the same mathematical formalism, so that the relational interpretation distinguishes itself from others only on the level of the interpretation of the formalism.

We believe that some confusion would be waived if the distinction from other interpretations was brought to the level of the formalism, so that it would become more than an interpretation, but indeed a theory. A formalism, like a grammar, imposes some constraints upon what is even expressible. So far, the relational interpretation has to deal with a standard format of quantum mechanics, which is not well-adapted. The challenge would be to reformulate quantum mechanics with a better-suited language, where the relational aspect is already implemented in the basic grammar. In this article, we propose a route towards such a formalization. Similar efforts have been recently pushed in this direction \cite{yang2018}, but our approach is essentially independent. Here, we start by building upon the idea, present in the most recent formulation of the relational interpretation, that the primordial entities are the relative facts \cite{dibiagio2021, dibiagio2021a}. 

This article organizes as a pedestrian investigation that slowly builds a consistent framework to incorporate step by step what we know about quantum mechanics. It subdivides into the following sections:
\begin{itemize}
    \item [\ref{sec:facts}.] \textbf{Facts}: we explain what a relative fact is, and we define \textit{fact-nets}.
    \item [\ref{sec:amplitude}.]  \textbf{Amplitudes}: we define the \textit{amplitude}, investigate its properties, its physical meaning and show how it can be used to compute conditional probabilities.
    \item [\ref{sec:composite_systems}.] \textbf{Composite systems}: we show how fact-nets can account for composite systems.
    \item [\ref{sec:measurement}.] \textbf{Measurement}: we show how measurements transform fact-nets.
    \item [\ref{sec:recovering}.] \textbf{Recovering the Hilbert space formalism}: we show how the standard formalism of quantum mechanics can be recovered.
    \item [\ref{sec:representing}.] \textbf{Self-space of a system}: we show an alternative representation of systems in fact-nets in terms of Hilbert spaces.
    \item [\ref{sec:amplitude_maps}.] \textbf{Amplitude maps}: we show how the amplitude can be captured as a map between relative Hilbert spaces, and fact-nets as diagrams.
    \item [\ref{sec:QRF}.] \textbf{Quantum reference frames}: we show how changes of quantum reference frames can be described within a triangle fact-net.
\end{itemize}

\section{Facts}
\label{sec:facts}

The assumed ontology of relational quantum mechanics was not etched in stone in the first paper \cite{rovelli1996b}. In its early stages, the key concept seemed to be the information exchanged between systems. In its most recent formulation, the ontology is based on \textit{facts} or \textit{quantum events}, both expressions being used interchangeably. We stick to the conception of facts as presented in \cite{dibiagio2021a}:
\begin{quote}
    \textit{[Facts] happen in interactions between any two systems and can be described as the actualization of the value of a variable of one system relative to the other.}
\end{quote}
The term `quantum event' should maybe be preferred because it is discrete, ephemeral, sparse, while `fact' has a connotation of something more absolute or enduring in time. But the term `quantum event' is also loaded with a space-time connotation, which is not really wanted. So we stick to the name of fact, which is already more spread in the literature, presents the morphological advantage of being short and sounds like the brief detonation it must be. 

A fact persists in the form of a correlation between systems. One of the crucial claims at the basis of relational quantum mechanics is that measurements are not ontologically different from interactions between two systems: a measurement of system $A$ by system $B$ is nothing more than an interaction between systems $A$ and $B$ viewed from the perspective of $B$. Therefore, from the point of view of an observer, i.e. any of the two reference systems implied in the interaction, the interaction is a measurement and a fact is an outcome of the measurement. Beware that the terms of observers and measurements are often loaded with additional assumptions, like macroscopicity. We do not make such an assumption here. In our language, the same fact can be regarded symmetrically as the measurement of one system by another or vice-versa.

The facts are sometimes completed with an epithet as \textit{relative facts} to underline that facts are relative to systems. In \cite{dibiagio2021}, it is said that facts are \textit{relative to one} system. However, since facts are also \textit{about one} system, it actually requires \textit{two} systems to have a fact. So facts must be labelled by the two systems which interact, and they can be described alternatively as a fact \textit{about} one system \textit{relative to} the other, or vice-versa.

The intuitive picture that emerges is that of a network of systems related by facts. Thus, we will assume that the basic structure of any physical situation can be represented by a graph, with systems as nodes and sets of possible facts as links in-between. For instance, consider
\begin{equation}\label{first}
   \begin{tikzcd}
   & D  \arrow[d, "x", no head, shift left] \arrow[d,  no head, shift right] \arrow[d,  no head]  & \\
A \arrow[r, "y", no head] \arrow[r, no head, shift right]  
& B \arrow[r, "z",  no head]
\arrow[r, no head, shift right]
& E \\
& C \arrow[u, "w", no head] \arrow[ru, "v"', no head] \arrow[ru, no head, shift right] &
\end{tikzcd} 
\end{equation}
There are five systems A, B, C, D, E. Between two systems, a set of lines represent a set of possible facts. The variables $v,w,x,y,z$ label the possible facts between two systems.

It is important to notice the move from more standard descriptions of physics. The usual grammar (some might say ontology) of physics has systems (humans, bananas) that carry properties (tall, yellow). Here, systems do not have properties. Instead, there are facts between systems. It is only in special cases that these facts can be understood as properties of a system. Besides, no foundational hierarchy is set between `systems' and `relations': they co-appear in fact-nets.

This presentation is contrasting with what is often taken as an axiom of quantum mechanics: `the state of a system is described by a vector in a Hilbert space'. The latter formulation pertains to the usual grammar of physics, where the state gathers the properties of a system. This is hiding the role of the observer, which is only implicit. Bringing the observer back on stage suggests our change of grammar.

\passage{Definition 1: fact-nets}

The core of our formulation of relational quantum mechanics are fact-nets, a synthetic way to present the multiple potential results of interactions between systems.

\medskip

\textit{A \textbf{fact-net} $F$ is a multi-graph\footnote{A multi-graph is a graph where we allow more than one edge between two vertices.} whose vertices are called \textbf{systems} and edges \textbf{facts}. The \textbf{fact-set} of edges between two systems $A$ and $B$ is denoted $F_{AB}$. The set of all edges incident to one vertex $A$ is denoted $F_A$. We say that a fact $f$ \textbf{involves} a system $A$ when $f \in F_A$, so that $F_A$ is the set of all facts involving $A$. We say that the system $A$ \textbf{is related to} the system $B$ if the fact-set $F_{AB}$ is non-empty. We call a system $A$ \textbf{finite} if $F_A$ is such. The set of all systems related to $A$ is denoted $N_A$ and called the \textbf{neighbourhood} or \textbf{environment} of $A$.}

\medskip

We see that, in a fact-net, the role of the systems is minimal: they are only here to organize facts. Then, any property assigned to a system can only come from the facts involving it. A fact always involves two systems, which can be interpreted as an observer and an observed or vice-versa. This is the implementation of the foundational principle of relational quantum mechanics that nothing should be said about a system without an explicit reference to the observer. The edges are not oriented because we assume the relations to be symmetric in the sense of involving both systems in the same way, akin to the action/reaction principle. Several edges are allowed between two systems to account for multiple possible facts.

One may wonder whether fact-nets themselves are relative to a choice of observer. In \cite{adlam2022}, it is argued that \textit{the existence of a fact} is an absolute fact, i.e. observer-independent, while only \textit{the value of that fact} is relative. This would imply that the fact-net also is absolute. However, we can be agnostic about this to use and develop the fact-net formalism. It is perfectly possible to think that fact-nets themselves only make sense with respect to an observer.

\section{Amplitude}
\label{sec:amplitude}

A reason why Hilbert spaces have become central in standard introductions of QM is that they are the natural mathematical object to formalize the superposition principle. However, the superposition principle brings on board some doubtful preconceptions about what is real. In popular science, a superposed state of positions is often interpreted as the particle being in two places at the same time. Most physicists recognize this statement is too rough. It is more appropriate to say that the state is a superposition of \textit{potentialities}, while the position of the particle is not definite yet. Understood as such, the superposition principle is a statement about correlations between earlier and later measurements, rather than a statement about the actual properties of the system. This view remains closer to the experimental content of the principle.

Let us now show how a superposed state looks in the fact-net formalism.

\passage{Example 1: two directions of spin}

Consider two Stern-Gerlach apparatuses $\mathfrak{S}_z$ and $\mathfrak{S}_x$, respectively oriented along the $z$ and $x$ axes. They both interact with a $1/2$-spin particle $S$. For each direction, $x$ or $z$, there are two possible facts, up or down, denoted $(0,1)$ for $\mathfrak{S}_z$ and $(+,-)$ for $\mathfrak{S}_x$. This is summarized as
\begin{equation}\label{2SG}
    \begin{tikzcd}
\mathfrak{S}_z \arrow[r, "0", no head, shift left] \arrow[r, "1"', no head, shift right] 
& S \arrow[r, "+", no head, shift left]  \arrow[r, "-"', no head, shift right]
& \mathfrak{S}_x \\
\end{tikzcd}
\end{equation}
It is a matter of experiment that the facts $\{0,1\}$ and $\{-,+\}$ are correlated. In the standard formalism of quantum mechanics, these facts would be understood as properties of the particle (spin $z$ and spin $x$) and expressed as states related by
\begin{equation}\label{eq:schroedinger_cat}
    \ket{\pm} = \frac{1}{\sqrt{2}} \left( \ket{0} \pm \ket{1} \right).
\end{equation}

Hence the idea that a cat can be both dead and alive, or that a particle can occupy two positions at the same time. But this is a very loose way of speaking, relying on the metaphysical postulate that systems have properties independently of them being measured. The spin can certainly be observed in the spin $+$ but there is a conceptual leap to say that this is a superposition of $0$ and $1$. What is actually observed are only facts: $+$,$-$, $0$ or $1$ and the conditional probabilities between them. So what matters for the completeness of our physical theory is the ability to compute the conditional probabilities between different facts. This is done with the help of an amplitude $W$, which takes two facts as input and gives back a complex number. In our example we have
\begin{equation}\label{eq:amplitude_2SG}
\begin{split}
    &W_{S}(0,\pm) = \frac{1}{\sqrt{2}} \\
    &W_{S}(1,\pm) = \pm \frac{1}{\sqrt{2}}.
\end{split}
\end{equation}
These equations have the same information content as equation \eqref{eq:schroedinger_cat} but present it in a way more adapted to the grammar of facts: an amplitude evaluates a degree of interdependence between facts involving a common system. This can be translated in terms of conditional probability. For instance
\begin{equation}
    P_{{S}}(0|+) = \frac{|W_{{S}}(0,+)|^2}{|W_{{S}}(0,+)|^2 + |W_{{S}}(1,+)|^2} = \frac{1}{2}.
\end{equation}

\passage{Example 2: position/momentum}

Let's provide another example. Consider a particle $S$, a ruler $X$ and an apparatus $P$ that measures momentum. The picture is the following
\begin{equation}
   \begin{tikzcd}
X \arrow[r, no head, shift right] \arrow[r, no head] \arrow[r, "x", no head, shift left]
& S \arrow[r, no head, shift right] \arrow[r, no head] \arrow[r, "p", no head, shift left]
& P 
\end{tikzcd} 
\end{equation}
All the physics of the situation can be computed from the amplitude
\begin{equation}
    W_S (x,p) = e^{\frac{i}{\hbar} p x}.
\end{equation}
$W$ is a tool that enables us to interpret some facts in relation to others. In the two examples above, the systems in the neighbourhood $N_S$ play the role of measurement apparatuses associated to observables defining sets of possible facts about $S$. In the case of the position/momentum, the two observables have different physical units, so the amplitude contains a conversion factor, $\hbar$. 

\passage{Definition 2: amplitude}

Here we come to generalize the previous examples to general fact-nets. 

\medskip

\textit{A quantum theory over a fact-net consists in describing the pairwise correlations between facts. Formally, for each system $S$, we have a function, called the \textbf{amplitude}, $W_S : F_S \times F_S \to \mathbb{C}$ that takes two facts involving $S$ and gives a complex number. The physical interpretation of the amplitude is given in terms of \textbf{conditional probability} as 
\begin{equation}
\label{eq:conditional_probability}
    P_{{S}} (a|b) =\frac{|W_S (a,b)|^2}{\sum_{a' \in F_{SA}} |W_S (a',b)|^2},
\end{equation}
with $a \in F_{SA}$ and $b \in F_{SB}$.
}

\medskip

This probability is an epistemic quantity relative to the common system $S$ involved by the two facts. The probability $ P_{S} (a|b) $ is interpreted as the conditional probability, from the perspective of $S$, of fact $a$ to be actualized knowing that fact $b$ has already been actualized.

This formula is analogous to Born's rule. It is the bridge from the mathematical formalism to the physical tests of the theory. One can check that the formula produces a real number between $0$ and $1$. Note that when the denominator vanishes, the numerator vanishes too, so that the conditional probability is not well-defined in such a case.

Let's point out that in this framework, the physical probabilities are always \textit{conditional probabilities}, which means that the probability of a fact is always relative to another fact. This point of view is in-line with the general philosophy of RQM: conditional probabilities are the expected consequence of considering only relative properties. 

In the standard formalism, the amplitude can usually be computed from the states as 
\begin{equation}
   W_S(a,b) = \braket{a}{b}.
\end{equation}
But our framework shifts the focus from the states to the amplitude so that we want to define the amplitude without introducing the states. By doing so, we retain all the physically relevant content.

We will now investigate, through examples, what are the generic properties to be expected of $W$.

\passage{Hermiticity}

First, we require $W$ being \emph{hermitian} or \textit{conjugate symmetric}:
\begin{equation}\label{eq:hermitian}
    \forall f,g \in F_S,~ W_S (g,f)=\overline{W_S (f,g)}.
\end{equation}
This property expresses a fundamental symmetry between the facts. However, this symmetry does not get through to the conditional probability because of the denominator in equation \eqref{eq:conditional_probability}, so that in general we have 
\begin{equation}
    P_{{S}} (a|b) \neq P_{{S}} (b|a).
\end{equation}
For instance, consider the following fact-net
\begin{equation}
    \begin{tikzcd}
A \arrow[r, "a_0", no head]
& S \arrow[r, "b", no head, shift left]
\arrow[r, no head]
\arrow[r, no head, shift right]
& B \\
\end{tikzcd}
\end{equation}
with a single fact between $A$ and $S$. Then we have, for any $b \in F_{SB}$,
\begin{equation}
    P_{{S}} (a_0|b) = \frac{|W_{{S}} (a_0,b)|^2}{|W_{{S}} (a_0,b)|^2} =  1,
\end{equation}
while $P(b|a_0)$ can take any real value between 0 and 1:
\begin{equation}
    P_{{S}} (b|a_0) = \frac{|W_{{S}} (b,a_0)|^2}{\sum_{b'} |W_{{S}} (b',a_0)|^2}.
\end{equation}
This example shows that, when there is a single fact within a fact-set, the fact is certain, while otherwise facts are only potentialities.

\passage{Incompatibility of parallel facts}

Consider the case when $W_{{S}}(a,b) = 0$, for two facts $a,b \in F_S$. Then
\begin{equation}
    P_{{S}} (a|b) = P_{{S}} (b|a) = 0.
\end{equation}
Conversely, the amplitude vanishes if the conditional probabilities do. In such a case, $a$ and $b$ are mutually excluding each other: we say the two facts are \textit{incompatible}. We expect that this can never happen if $a=b$, in which case we shall have instead
\begin{equation}
\label{eq:P=1}
    P_{{S}} (a|a) = 1.
\end{equation}

Two different facts within the same fact-set are said to be \textit{parallel}. Motivated by the examples above, we will assume that parallel facts are always incompatible. This implies that for two different facts $a_0 ,a_1 \in F_{SA}$ we have
\begin{equation}
    W_S (a_0,a_1)= 0.
\end{equation}
This condition suffices to guarantee \eqref{eq:P=1}. However, for simplicity, since this is the case in all the considered examples, we will assume a slightly stronger condition
\begin{equation}\label{eq:ind_par_facts}
    W_S (a_i,a_j) = \delta_{ij}.
\end{equation}
This simply corresponds to the normalisation of states that represent facts, as we will see in the coming sections.

\passage{Chain property}

We now discuss a property that is not required for a generic fact-net, but appears in numerous cases with interesting consequences. Let's consider an environment made of three systems:

\begin{equation}
    \begin{tikzcd}
A \arrow[r, "a", no head, shift left]
\arrow[r, no head] \arrow[r, no head, shift right]
& S \arrow[r, "b", no head, shift left] \arrow[r, no head]  \arrow[r, no head, shift right]
& B \\
& C \arrow[u, "c", no head, shift left] \arrow[u, no head] \arrow[u, no head, shift right] & 
\end{tikzcd}
\end{equation}
We say that the \textit{chain property} holds for $a \in F_{AS}$ and $b \in F_{BS}$ with respect to $C$ if
\begin{equation}\label{eq:chain_prop}
     W_S (a,b) = \sum_{c \in F_{CS}} W_S (a,c) W_S (c,b).
\end{equation}
If the property holds for \emph{all} $a \in F_{AS}$ and $b \in F_{BS}$, we say that it holds for $A$ and $B$ with respect to $C$. If the property holds for any such triple in $N_S $, we say that $S$ is \emph{chain-complete}.

This is a strong condition that implies powerful features of the theory. It has already been identified as an important property of quantum theory in \cite{rovelli2021a}.

\passage{Example 3: three directions of spins}

As an example, consider a spin-$\frac{1}{2}$ particle with three surrounding Stern-Gerlach devices:
\begin{equation}
    \begin{tikzcd}
\mathfrak{S}_z \arrow[r, "0", no head, shift left] \arrow[r, "1"', no head, shift right] 
& S \arrow[r, "+", no head, shift left]  \arrow[r, "-"', no head, shift right]
& \mathfrak{S}_x \\
&   \mathfrak{S}_y \arrow[u, "i", no head] \arrow[u, "-i"', no head, shift right] & 
\end{tikzcd}
\end{equation}
Equation \eqref{eq:amplitude_2SG} already gives the amplitude between facts in $F_{\mathfrak{S}_z S}$ and $F_{\mathfrak{S}_xS}$. We must complete the definition of $W$ to account for the facts in $F_{\mathfrak{S}_y S}$:
\begin{equation}
\label{eq:W_yz}
    \begin{split}
        &W_S(0, \pm i) = \frac{1}{\sqrt{2}} \\
        &W_S(1, \pm i)= \pm \frac{i}{\sqrt{2}}
    \end{split}
\end{equation}
and
\begin{equation}
\label{eq:W_yx}
    \begin{split}
        &W_S(+, \pm i) = \frac{1 \pm i}{2} \\
        &W_S(-, \pm i)=  \frac{1- \pm i }{2}
    \end{split}
\end{equation}
This gives the right probability amplitudes between different possible outcomes of the relevant experiments. One can then check that such system $S$ is chain-complete. In fact, by assuming that $S$ is chain-complete, one can compute \eqref{eq:W_yx} from \eqref{eq:amplitude_2SG} and \eqref{eq:W_yz}.

\passage{Example 4: propagator}

Let's provide another example of a chain-complete system. Consider a free particle of mass $m$ moving in a $1$-dimensional space and a position measurement device $X$ turned on three times $t_0,t_1,t_2$. In this setup, time is an external parameter and $X$ generates three different fact-sets, each labelled by a different moment in time. So, on the fact-net, $X$ appears as three different systems:
\begin{equation}\label{eq:propagator_3}
    \begin{tikzcd}
X_0 \arrow[r, "x_0", no head, shift left]
\arrow[r, no head] \arrow[r, no head, shift right] 
& S \arrow[r, "x_1", no head, shift left]  \arrow[r, no head] \arrow[r, no head, shift right]
& X_1 \\
& X_2 \arrow[u, "x_2", no head, shift left] \arrow[u, no head] \arrow[u, no head, shift right] & 
\end{tikzcd}
\end{equation}
Then the amplitude is
\begin{equation}\label{eq:propagator_free}
    W_S(x_i , x_j) = \left( \frac{1}{2 \pi (i \hbar (t_i - t_j) /m + \sigma^2 )} \right)^{1/2}e^{- \frac{ (x_i-x_j)^2}{2 (i \hbar (t_i - t_j)/ m + \sigma^2) }}
\end{equation}
where $\sigma$ is the precision of $X$, i.e. the smallest distance that $X$ can resolve.
This is the standard formula for the propagator of a free particle. One checks easily that $W_S$ is hermitian. The incompatibility of parallel facts as expressed by eq. \eqref{eq:ind_par_facts} is taken as a definition of the amplitude within the same fact-set. With some more work, one can check that $S$ is chain-complete. Interestingly, no ordering between $t_0, t_1$ and $t_2$ is required for these properties to hold. Besides, one can compute the conditional probability density
\begin{equation}
\label{eq:probability_propagator}
    P_{{S}} (x_i | x_j) = \left(2 \pi \left( \frac{\hbar^2 (t_i - t_j)^2}{ m^2 \sigma^2} + \sigma^2 \right) \right)^{-\frac{1}{2}} \, e^{- \frac{ (x_i-x_j)^2 }{ \frac{\hbar^2 (t_i - t_j)^2}{ m^2 \sigma^2} + \sigma^2 }}
\end{equation}
We see that a non-zero $\sigma$ is necessary to get a non-trivial probability and indeed, in experiments, the precision on $X$ is never perfect. So the fact-sets $F_{SX}$ is not really $\mathbb{R}$, as we have assumed to perform the integration to get \eqref{eq:probability_propagator}, but rather a discrete set of possible intervals of size $\sigma$.

In standard quantum mechanics, the time evolution of an isolated system is described as a unitary transformation of states within the same Hilbert space. In the fact-net framework, this evolution appears as a relation between sets of potential facts associated with measurement apparatuses at different moments in time. So there is no such thing as the time evolution of a fact-net because time must be already included within the fact-net.

The description of time evolution illustrates a slight departure from the more common use of the word `system' by physicists. The traditional denomination implicitly assumes the existence of an operational labelling \textit{t} and identifies different snapshots
as the same system. Our convention enlarges the use of the word ‘system’ to refer to the snapshots themselves, which are believed to be ontologically primary. It is mainly a matter of word, but we believe our convention is more in line with a generally covariant philosophy.

As we will see in section \ref{sec:amplitude_maps}, it is still possible to express $W_S$ as a unitary map between the facts at different moments of time, looking like
\begin{equation}\label{eq:amp_map1}
        \sum_{x_0,x_1} W_S(x_0,x_1) \dyad{x_0}{x_1}.
\end{equation}
However, one should bear in mind the important conceptual difference with standard quantum mechanics: we are not assuming the existence of an intrinsic evolution of the state in between the measurements (the facts).

\passage{Example 5: $\beta$-decay}

For the reason just mentioned, the fact-net formalism is also more suitable to express the time-reversal symmetry of quantum mechanics. Indeed, in standard quantum mechanics, the quantum state attributed to a system in-between two measurements is not time-reversal symmetric. This was shown in \cite{rovelli2016} with a striking example:
\begin{quote}
\textit{We can describe the $\beta$-decay of a nucleus by means of the electron wave function concentrated on the nucleus, slowly leaking out in all directions until the electron gets detected by a Geiger counter at some distance. The time reversed phenomenon –the Geiger apparatus emits an electron that is then captured by the nucleus– is perfectly possible, but it is not described by a wave function that converges symmetrically onto the nucleus.}
\end{quote}
This argument was raised against a realistic interpretation of the quantum state. The fact-net corresponding to this example is 
\begin{equation}
   \begin{tikzcd}
Nucleus \arrow[r, "1"', no head, shift right]  \arrow[r, "0", no head, shift left]
& Electron \arrow[r, no head, shift right] \arrow[r, no head] \arrow[r, "x", no head, shift left]
& Geiger \, counter 
\end{tikzcd} 
\end{equation}
The facts 0 and 1 tell whether the nucleus has decayed or not, so for instance, whether the nucleus is carbon-14 or nitrogen-14. The facts $x$ correspond to a click of the Geiger counter in position $x$. In standard quantum mechanics, knowing that the nucleus is initially carbon-14 allows describing the evolution of the state of the electron at all times later, until the electron is detected somewhere outside, by the Geiger counter. However, the backward evolution of this state disagrees with what standard quantum mechanics would predict for the evolution of the state of an electron that would be emitted by the Geiger counter. In that sense, standard quantum mechanics is not time-reversal symmetric. The fact-net formalism carries an easy solution to the difficulty: there is no continuity of the evolution to be assumed in-between a pair of facts.

\passage{Example 6: entanglement}

In the examples above, the central system $S$ was thought of as the system observed by the surrounding measurement apparatuses. However, the formalism of fact-nets is symmetric and thus admits a mirror interpretation where $S$ plays the role of an observer, observing its neighbour systems. Consider the following fact-net
\begin{equation}
    \label{eq:entanglement-fact-net}
   \begin{tikzcd}
A \arrow[r, "0"', no head, shift right]  \arrow[r, "1", no head, shift left]
& O \arrow[r, "0"', no head, shift right]  \arrow[r, "1", no head, shift left]
& B 
\end{tikzcd} 
\end{equation}
where $A$ and $B$ are qubits and $O$ the observing system. Then the amplitude 
\begin{equation}
\label{eq:entanglement}
    W_{{O}}(i_A,j_B) = \frac{\delta_{i_A,j_B}}{\sqrt{2}},
\end{equation}
for $i_A,j_B \in \{0,1\}$, describes an entangled state of the coupled system $A \cup B$. It corresponds to the Bell state
\begin{equation}
   \frac{1}{\sqrt{2}} \left(\ket{00} + \ket{11} \right),
\end{equation}
and should be understood as a state of $A\cup B$ \emph{relative to $O$}, or, equivalently, a state of $O$ relative to $A\cup B$. The description of composite systems in the fact-net formalism is a subject of section \ref{sec:composite_systems}.

\passage{Example 7: spinor}

In the example of the $\beta$-decay, we have seen a fact-net where the fact-sets don't have the same cardinality. Let's see another such example: a particle for which one can measure both the position and the spin. The fact-net is 
\begin{equation}
    \begin{tikzcd}
\mathfrak{S}_z \arrow[r, "0", no head, shift left] \arrow[r, "1"', no head, shift right] 
& S \arrow[r, "x", no head, shift left]
\arrow[r, no head]
\arrow[r, no head, shift right]
& X \\
\end{tikzcd}
\end{equation}
We use the variable $\sigma$ for the facts of spins and the variable $x$ for the facts of position. Then the amplitude is a function $W(x,\sigma)$, that is usually called a spinor. It expresses a correlation between the spin and the position. In the standard formalism, this would correspond to an entangled state between spin and position variables, like
\begin{equation}\label{eq:amp_state1}
        \sum_{x , \sigma }W_{{S}} (x,\sigma) \ket{x}\ket{\sigma}.
\end{equation}

If there are only two facts $x_0$ and $x_1$ available between $S$ and $X$ we can model a particle going through a Stern-Gerlach by the amplitude, for $i \in \{0,1\}$
\begin{equation}
    W_S (i,x_j) = \frac{\delta_{ij}}{\sqrt{2}},
\end{equation}
which is analogous to \eqref{eq:entanglement}. Indeed, the Stern-Gerlach creates a correlation between a spin orientation and a position.

\passage{What is the amplitude?}

From the different examples above, we see that the interpretation of W is very different depending on the physical situation that the fact-net is meant to represent. There are two main ways to look at it:
\begin{itemize}
    \item A map that performs a change of basis or a unitary time evolution. 
    \item A state of a bipartite system, which covers cases like a spinor (one system with several degrees of freedom) or Bell's states (two systems with correlated d.o.f).
\end{itemize}
The duality between these views is provided by the usual Choi-Jamiołkowski isomorphism. 

This dual interpretation of the amplitude adds to another form of duality in the interpretation of the fact-nets. Indeed, in all cases encountered so far, the fact-net was \textit{star-shaped}, with one central system surrounded by a few others, which are not related to each other by any facts. There are two main ways to consider such a fact-net. Either the central system is regarded as a quantum system and the surrounding ones as measurement apparatuses, or the central system is thought of as a classical observer and the surrounding systems are a collection of quantum observed systems. It is noticeable that these two symmetric cases are treated identically in the fact-net formalism. This is because fact-nets do not assume any preferential interpretations as observed or observer: both perspectives are allowed. In section \ref{sec:QRF}, we will encounter different types of fact-nets, with triangles of systems, which are suitable to describe changes of the quantum reference frames.

\section{Composite systems}
\label{sec:composite_systems}

The union of two systems is again a system. Consider the fact-net
\begin{equation}\label{eq:composite_two}
   \begin{tikzcd}
A \arrow[r, no head] \arrow[r, no head, shift left]
& S \arrow[r, no head] \arrow[r, no head, shift left] \arrow[r, no head, shift right]
& B
\end{tikzcd} 
\end{equation}
Another description could be given by a fact-net of the following form
\begin{equation}\label{eq:composite_one}
   \begin{tikzcd}
S \arrow[r, no head, shift right] \arrow[r, no head, shift right, shift right] \arrow[r, no head, shift right, shift right] \arrow[r, no head] \arrow[r, no head, shift left] \arrow[r, no head, shift left, shift left]  \arrow[r, no head, shift left, shift left, shift left]
& A \cup B
\end{tikzcd},
\end{equation}
where the systems $A$ and $B$ are composed into a single system, and we consider facts to be pairs of facts from the original diagram
\begin{equation}\label{eq:composite_facts}
    F_{S, A \cup B} = F_{S,A} \times F_{S,B}.
\end{equation}
However, from \eqref{eq:composite_two} to \eqref{eq:composite_one}, some information is lost because the non-trivial amplitude $W(a,b)$ only appears in \eqref{eq:composite_two}. 
Both situations can also fit on the same fact-net as
\begin{equation}\label{diag:composite}
   \begin{tikzcd}
A \arrow[r, no head, shift left] \arrow[r, no head]
& S \arrow[r, no head] \arrow[r, no head, shift left] \arrow[r, no head, shift right]
& B \\
 & A \cup B \arrow[u, no head, shift left] \arrow[u, no head, shift left, shift left] \arrow[u, no head] \arrow[u, no head, shift right] \arrow[u, no head, shift right, shift right] \arrow[u, no head, shift right, shift right, shift right]  & 
\end{tikzcd} 
\end{equation}
What should be the amplitude between the facts $(a_0,b) \in F_{S, A \cup B}$ and  $a_1 \in F_{SA}$? We expect the probability of getting $(a_0,b)$ conditioned on $a_1$ to be zero if $a_0 \neq a_1$ and $P_{{S}} (b|a_0)$ otherwise, i.e.
\begin{equation}\label{eq:composite_proba}
    P_{{S}} ((a_0,b)|a_1) = \delta_{a_0,a_1} P_{{S}} (b|a_1).
\end{equation}
For the amplitude, it implies \begin{equation}
    W_{{S}} ((a_0,b),a_1 ) \propto \delta_{a_0,a_1} W_{{S}} (b,a_1).
\end{equation}
The value of $W_{{S}} ((a,b),a)$ is not further constrained, but the simplest choice would clearly be
\begin{equation}\label{eq:composite_ampli}
    W_{{S}} ((a_0,b),a_1 ) = \delta_{a_0,a_1} W_{{S}} (b,a_1).
\end{equation}
We thus treat (\ref{eq:composite_ampli}) as completing the definition of the amplitude for the fact-net (\ref{diag:composite}). Notice here that for the system $S$ in (\ref{diag:composite}) the chain property does \emph{not} hold for any configuration of facts and intermediate systems.

So far, we have not said anything about the facts between a system and itself. We can now remark the following. If one maintains the rule \ref{eq:composite_facts} when $S=A$, a condition is imposed on the cardinality of $|F_{AA}|$. Indeed, $F_{AA} = \emptyset$ would imply $F_{A,A\cup B} = \emptyset$, which does not seem reasonable. However, one can take instead that $|F_{AA}|=1$, which would mean that each system has one fact with itself, the fact of being itself. Then it would imply $F_{A,A\cup B} = F_{AB} = F_{B,A\cup B}$, which can be understood as follows: a subsystem can only relate to the bigger system by relating to its complement.

\passage{Yes/no measurement}

Reciprocally, it is also possible to decompose a fact-net by splitting a system into subsystems. The maximal decomposition is done in terms of qubits corresponding to yes/no measurements. Consider two systems like
\begin{equation}
\label{eq:yes-no-AS}
   \begin{tikzcd}
A \arrow[r, no head, shift right] \arrow[r, no head] \arrow[r, "a_i", no head, shift left]
& S
\end{tikzcd} 
\end{equation}
Each fact $a_i \in F_{SA}$ can be promoted to an interacting qubit (yes/no measurement), and we get the equivalent fact-net:
\begin{equation}
\label{eq:splitted-yes-no}
   \begin{tikzcd}
 \dots \arrow[dr, no head] \arrow[dr, no head, shift left]   & a_i \arrow[d, no head] \arrow[d, no head, shift right] & \dots \arrow[dl, no head] \arrow[dl, no head, shift right] \\
a_1 \arrow[r, "0"', no head, shift right] \arrow[r, "1", no head]
& S \arrow[r, no head, shift right] \arrow[r, no head] & a_N
\end{tikzcd} 
\end{equation}
Because of the incompatibility of parallel facts in \eqref{eq:yes-no-AS}, we must have, for $i \neq j$,
\begin{equation}
\label{eq:W(1,1)}
    W_{{S}} (1_i,1_j) = 0.
\end{equation}
Conversely, it is possible to reduce a fact-net from \eqref{eq:splitted-yes-no} to \eqref{eq:yes-no-AS} provided \eqref{eq:W(1,1)} is satisfied. However, doing so, one loses the information contained in $W_{{S}} (0_i,0_j)$ and $W_{{S}} (0_i,1_j)$ which does not appear in \eqref{eq:yes-no-AS}. 

Concretely, one could think of a particle moving on a circle with a discrete but high number $N$ of possible sites $x_i$. In this case, one could assume the equiprobability of sites which leads to the conditional probabilities
\begin{equation}
\begin{split}
    &P_{{S}} (1_i | 0_j) = \frac{1- \delta_{a_i a_j}}{N-1} \\
    &P_{{S}} (0_i | 1_j) = 1 - \delta_{a_i a_j} \\
    &P_{{S}} (0_i | 0_j) =  \frac{N-2 + \delta_{a_i a_j}}{N-1} 
\end{split}
\end{equation}
In the limit of large $N$, these probabilities are satisfied by the amplitude
\begin{equation}
    \begin{split}
        &W_{{S}} (0_i, 0_j) = 1 \\
        &W_{{S}} (0_i,1_j) = 0.
    \end{split}
\end{equation}

\section{Measurement}
\label{sec:measurement}

A fact-net provides the structural relations between a set of interacting systems. The different facts within the same fact-set are different possibilities for the result of the interaction. So a fact-net gathers many potential configurations of facts.

During a measurement procedure, only one fact in the fact-set is actualized. When one fact is chosen for each fact-set, the restricted fact-net can be seen as a \textit{branch} of the world, to mimic the language of many-worlds. More generally, one can consider restrictions of the fact-sets, where not just a single fact is selected, but a full subset of a fact-set. This would correspond to a post-selection of the possible facts. It results in a new fact-net with a new amplitude $W'$ as we now explain.

A \textit{measurement relative to a reference system O} is a process $M_O$ that sends a fact-net to another fact-net. The new fact-net has the same systems, but the fact-sets are restricted to a subset $M_{OA} \subset F_{OA}$ for all $A \in N_O$.

After the restriction, the amplitude $W$ is still hermitian and parallel facts are still incompatible. However, the chain property is generally broken. Yet, it is always possible to restore the chain property for a chosen ordering of $N_O$ by defining a new amplitude $W'$.

First, choosing an ordering of $N_O$ amounts to introduce a local and primitive notion of time: the facts happen in successive order. Let's then exemplify the construction of $W'$ for the following fact-net with an ordering of $N_S$ going from A to D.
\begin{equation}
   \begin{tikzcd}
B \arrow[dr, no head] \arrow[dr, no head, shift right]   &  &  C \arrow[dl, no head] \arrow[dl, no head, shift right]\\
A \arrow[r, no head, shift right] \arrow[r, no head] 
& S \arrow[r, no head, shift right] \arrow[r, no head] & D
\end{tikzcd} 
\end{equation}
We first define $W'$ for closest neighbours in the ordering of $N_O$, i.e.
\begin{equation}
\begin{split}
    &W_{{S}} '(a,b) \overset{\text{def}}= W_{{S}} (a,b) \\
    &W_{{S}} '(b,c) \overset{\text{def}}= W_{{S}} (b,c) \\
    &W_{{S}} '(c,d) \overset{\text{def}}= W_{{S}} (c,d)
\end{split}
\end{equation}
Then, the other $W'$ are defined by imposing the chain property on intermediate facts:
\begin{equation}
\begin{split}
    &W_{{S}} '(a,c) \overset{\text{def}}= \sum_{b \in R_{SB}} W_{{S}} (a,b)W_{{S}} (b,c). \\
    &W_{{S}} '(b,d) \overset{\text{def}}= \sum_{c \in R_{SC}} W_{{S}} (b,c)W_{{S}} (c,d) \\
    &W_{{S}} '(a,d) \overset{\text{def}}= \sum_{b \in R_{SB}} \sum_{c \in R_{SC}} W_{{S}} (a,b)W_{{S}} (b,c)W_{{S}} (c,d).
\end{split}
\end{equation}
This construction generalizes easily for an arbitrary number of neighbours. For an ordering of them like $X_0 \to X_1 \to ... \to X_n$, the chain property is satisfied between any $X_i$ and $X_j$ with respect to any $X_k$ with $i<k<j$.

\passage{Double-slit experiment}

We can apply the previous construction to understand the double-slit experiment. The latter can be constructed as a post-selection from the fact-net of the propagator \eqref{eq:propagator_3}. Usually, one proceeds the other way around and starts from the double-slit experiment, and then progressively increases the number of slits, until the slits give way to free space. This motivates the Feynman path-integral, which enables the computation of the propagator. Here, we illustrate the measurement process on fact-nets by considering that the slits are a position measurement device that only keeps two facts in the fact-set. Similarly, the source measures the position of the electron with only one possible fact. So the situation is the following:
\begin{equation}
   \begin{tikzcd}
   & Slits \arrow[d, "A"', no head, shift right] \arrow[d, "B", no head]   &  \\
Source \arrow[r, "1", no head]
& Electron \arrow[r, no head, shift right] \arrow[r, no head] \arrow[r, "x", no head, shift left]
& Screen
\end{tikzcd}
\end{equation}
We assume that the electron first leaves the source, then goes through the slits $A$ and $B$ located at a distance $\ell_1$ from the source and separated by a distance $2d$, and finally reaches the screen at a distance $\ell_2$ from the slits. This scenario carries a local ordering of the facts for the electron: source, slits, screen. The amplitude can be deduced from that of the propagator, eq. \eqref{eq:propagator_free}. Between the source and the slits, and the slits and the screen, the formula only requires an easy adaptation to transform the difference of times $(t_i-t_j)$ into distances $\ell$:
\begin{equation}
\begin{split}
     &W_{{E}} '(A,1) = \left( \frac{1}{2 \pi (i \ell_1^2 + \sigma^2 )} \right)^{1/2}e^{- \frac{ d^2}{2 (i \ell_1^2 + \sigma^2) }} \\
     &W_{{E}} '(x,A) = \left( \frac{1}{2 \pi (i \ell_2^2 + \sigma^2 )} \right)^{1/2} e^{- \frac{ (x-d)^2}{2 (i \ell_2^2 + \sigma^2) }}
\end{split}
\end{equation}
Then the amplitude between the source and the screen must be so that the chain property is satisfied, which is necessary to get the right prediction of the interference pattern. So we define
\begin{equation}
    W_{{E}} '(x,1) \overset{\text{def}}= W_{{E}} '(x,A)W_{{E}} '(A,1) + W_{{E}} '(x,B)W_{{E}} '(B,1).
\end{equation}
With the amplitude defined as such, the chain property is only satisfied between the source and the screen with respect to the slits, i.e. for the ordering of $N_{Electron}$ which is implied by the local clock.

\section{Recovering the Hilbert space formalism}\label{sec:recovering}

One of the goals for developing the fact-net formalism is to kick Hilbert spaces out of quantum mechanics. However, it is important to show that we are not losing anything crucial. In this section, we show that fact-nets are sufficient to recover the standard description of quantum physics.

\passage{Relative Hilbert spaces}

Consider the fact-net
\begin{equation}
   \begin{tikzcd}
A \arrow[r, no head, shift right] \arrow[r, no head] \arrow[r, no head, shift left]
& S
\end{tikzcd} 
\end{equation}
In the spirit of RQM, we first define the Hilbert space of $S$ \emph{relative} to $A$ as
\begin{equation}
    \mathcal{H}_{S|A} \overset{\text{def}}= l^2(F_{AS}).
\end{equation}
It is the complex vector space freely generated by the facts in $F_{AS}$, i.e. any vector reads:
\begin{equation}
    \sum_{a \in F_{AS}} c_a \ket{a},
\end{equation}
with $c_a \in \mathbb{C}$ so that 
\begin{equation}
    \sum_{a \in F_{AS}} |c_a|^2 < \infty.
\end{equation}
The scalar product is defined so that the basis $\{ \ket{a} \mid a \in F_{AS} \}$ is orthonormal, so that $W(a_i,a_j) = \delta_{ij} = \braket{a_i}{a_j}$.

The notion of relative Hilbert space introduces in the language an asymmetry between the observed system and the reference system (the observer). However, this distinction is not present at the level of the fact-net, and for this reason, we have $\mathcal{H}_{S|A} = \mathcal{H}_{A|S}$.

It should be noted that, in this framework, the Hilbert space $\mathcal{H}_{S|A}$ of $S$ relative to $A$ comes with a preferred basis. Indeed, the set of possible facts between two systems is not derived from an \textit{a priori} definition of the systems, but it is instead part of the \textit{definition} of what the interaction between two systems is. In other words, systems are characterized by the way they can connect to their environment. This feature of the formalism is simply the translation of: 1) in relational quantum mechanics interactions are ontologically of the same nature than measurements, and 2) measurements come with a preferred basis.

We are now going to add other systems in $N_S$ and express them as operators on $ \mathcal{H}_{S|A}$. Consider a system $B \in N_S$:
\begin{equation}
   \begin{tikzcd}
A \arrow[r, no head] \arrow[r, no head, shift left] \arrow[r, no head, shift right]
& S \arrow[r, no head, shift right] \arrow[r, no head] \arrow[r, no head, shift left]
& B 
\end{tikzcd}.
\end{equation}
Notice, first, that a consequence of \eqref{eq:composite_facts} is that
\begin{equation}
    \mathcal{H}_{S|A \cup B} \cong \mathcal{H}_{S|A} \otimes \mathcal{H}_{S|B},
\end{equation}
so that a relative Hilbert space of a composite system is given by the tensor product, as expected. Generally speaking, one could consider the \emph{extended} Hilbert space of $S$:
\begin{equation}
    \mathcal{H}_S^{ext}\overset{\text{def}}= \bigotimes_{A \in N_S} \mathcal{H}_{S|A}
\end{equation}
It is the Hilbert space of $S$ \emph{relative to its whole environment}.

Then, the amplitude $W_S $ enables to express a fact $b \in F_{SB}$ as a vector in $\mathcal{H}_{S|A}$
\begin{equation} \label{ketb}
    \ket{b} \overset{\text{def}}= \sum_{a \in F_{AS}} W_S(a,b) \ket{a}.
\end{equation}
The system $B$ can then be expressed as a \textit{relative observable}, i.e. a self-adjoint operator $\hat B \in \mathcal{L}(\mathcal{H}_{S|A})$, defined by the spectral decomposition:
\begin{equation}\label{eq:obs}
    \hat B \overset{\text{def}}= \sum_{b \in F_{SB}} f(b) \dyad{b},
\end{equation}
with a function $f: F_{SB} \to \mathbb{R}$. In our framework, the facts $b$ are elements of an abstract set $F_{SB}$ and this explains the need to introduce the function $f$ to turn these facts into numerical outcomes of measurements, so that the fact-sets become sample spaces. The choice of the function $f$ is conventional and includes specifying a coordinate or a unit system. It is of course possible to label the fact-set directly by the eigenvalues that one expects for $B$ so that $f$ is the identity.

Now consider a third system $C \in N_S$
\begin{equation}
    \begin{tikzcd}
A \arrow[r, "a", no head, shift left]
\arrow[r, no head] \arrow[r, no head, shift right]
& S \arrow[r, "b", no head, shift left] \arrow[r, no head]  \arrow[r, no head, shift right]
& B \\
& C \arrow[u, "c", no head, shift left] \arrow[u, no head] \arrow[u, no head, shift right] & 
\end{tikzcd}
\end{equation}
Then, it is easy to show that two facts $b \in F_{SB}$ and $c \in F_{SC}$ satisfy the chain property with respect to $A$ if and only if 
\begin{equation}
    \label{eq:braket=W}
    \braket{b}{c} = W_S(b,c).
\end{equation}
If $S$ is chain-complete, then this equality is always true, so that all the facts involving $S$ are expressed as states in $\mathcal{H}_{S|A}$ and $W_S$ gives the value of their scalar product.

We have thus recovered the main features of the standard formulation of quantum mechanics. The salient point of this construction is the need to take one system (here $A$) as a reference system, with respect to which the other facts are described. This departs from the fact-net formalism, where all systems are treated on the same ground.

\medskip

As an example, consider a particle $S$, a ruler $X$ and an apparatus $P$ that measures momentum. The fact-net is
\begin{equation}
   \begin{tikzcd}
X \arrow[r, no head, shift right] \arrow[r, no head] \arrow[r, "x", no head, shift left]
& S \arrow[r, no head, shift right] \arrow[r, no head] \arrow[r, "p", no head, shift left]
& P 
\end{tikzcd} 
\end{equation}
The fact-sets are indexed by real variables $x,p \in \mathbb{R}$. The amplitude is
\begin{equation}
    W_{{S}} (x,p) =  e^{\frac{i}{\hbar} p x}
\end{equation}
The Hilbert space of $S$ relative to $X$, denoted $\mathcal{H}_{S|X}$, is spanned by the $\ket{x}$. Then, the facts $p$ can be represented as states in $\mathcal{H}_{S|X}$,  
\begin{equation}
    \ket{p} \overset{\text{def}}=  \sum_x e^{\frac{i}{\hbar} p x} \ket{x}
\end{equation}
$X$ and $P$ can be represented as observables on $\mathcal{H}_{S|X}$,
\begin{equation}
    \begin{split}
        \hat X &= \sum_x x \dyad{x} \quad,\quad \sum_x \overset{\text{def}}= \int dx \\
        \hat P &= \sum_p p \dyad{p} \quad,\quad \sum_p \overset{\text{def}}= \int \frac{dp}{2\pi\hbar} 
    \end{split}
\end{equation}
Here, the function $f$ is the identity because the fact-sets are already labelled by the value of the measurement outcome. 
The momentum operator then reads
\begin{align}
    \hat P &=  \sum_{x,x',p} p\, e^{\frac{i}{\hbar} p (x-x')} \dyad{x}{x'} \nonumber\\
    &=  \sum_{x,x'} (-i\hbar)\partial_x ( \delta (x-x') ) \dyad{x}{x'} ,
\end{align}
and we recover the canonical commutation relations
\begin{equation}
\begin{split}
       \hat{X}\hat{P} - \hat{P}\hat{X}
    &= \sum_{x,x'} (-i\hbar) (x-x') \partial_x( \delta (x-x') ) \dyad{x}{x'} \\
    &= \sum_{x,x'} i\hbar \, \delta (x-x') \dyad{x}{x'} \\
    &= i\hbar\, \mathbb{1}_{\mathcal{H}_{S|X}} .
\end{split}
\end{equation}

\section{Self-space of a system}
\label{sec:representing}

Let's consider a system $S$ in a fact-net surrounded by systems $A_i$. We have already introduced the relative Hilbert spaces $\mathcal{H}_{S|A_i}$ describing $S$ from the point of view of $A_i$. The communication between those relative Hilbert spaces is governed by the $W$ maps. One can wonder if it is possible to assign to each system a Hilbert space that is not relative to any other system in particular but is wide enough to account for the points of view of all the other systems, in the same way that the Hilbert space of the usual framework of quantum mechanics can account for the points of view of different measuring devices.

In this section, we will investigate under which condition we can assign such a \textit{self-space} $\mathcal{H}_S$ to $S$. In this self-space, we want all facts $f,g\in F_S$ to be represented by normalized vectors $\ket{f}_S,\ket{g}_S\in\mathcal{H_S}$ such that $\braket{f}{g}_S=W_S (f,g)$. We already know that this is the case for the relative Hilbert space $\mathcal{H}_{S|A}$, provided  $S$ is chain-complete. Our task is now to construct such a Hilbert space, called the self-space $\mathcal{H}_S$, so that we always have $W_S(b,c)=\braket{b}{c}$, even if the chain property is not satisfied. 

Let's start by introducing the Hilbert space:
\begin{equation}
    \mathcal{H}^{\oplus}_S \overset{\text{def}}= l^2 (F_S),
\end{equation}
with the scalar product defined such that $\{\ket{f} \mid f \in F_S\}$ forms an orthonormal basis. Note that $\mathcal{H}^{\oplus}_S= \bigoplus_i \mathcal{H}_{S|A_i}$. We can then define a linear endomorphism $\mathcal{W}^S : \mathcal{H}^{\oplus}_S \to \mathcal{H}^{\oplus}_S$ as:
\begin{equation}
    \mathcal{W}^S \ket{g}_S = \sum_{f \in F_S} W_S (f,g) \ket{f}_S.
\end{equation}
The following Lemma holds:

\begin{lemma}[Self-space of a system]\label{lem:rep}
    A finite system $S$ admits a self-space if and only if $\mathcal{W}^S $ is positive semi-definite.
\end{lemma}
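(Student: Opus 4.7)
The statement reduces to the classical characterisation of positive semi-definite matrices as Gram matrices. The matrix of $\mathcal{W}^S$ in the orthonormal basis $\{\ket{f}\}_{f\in F_S}$ of $\mathcal{H}^{\oplus}_S$ has entries $\bra{f}\mathcal{W}^S\ket{g}=W_S(f,g)$, and a self-space is exactly a Hilbert space in which the family $\{\ket{f}_S\}_{f\in F_S}$ realises this matrix as its Gram matrix. The plan is to prove each direction by translating between these two pictures; finiteness of $S$ guarantees that $\mathcal{H}^{\oplus}_S$ is finite-dimensional, so no functional-analytic subtleties arise.

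For the necessity direction, suppose a self-space $\mathcal{H}_S$ exists with vectors $\ket{f}_S$ satisfying $\braket{f}{g}_S=W_S(f,g)$. Define the linear map $U:\mathcal{H}^{\oplus}_S\to\mathcal{H}_S$ by $U\ket{f}:=\ket{f}_S$, extended by linearity. Then for any $v=\sum_f c_f\ket{f}$,
\begin{equation*}
\langle v,\mathcal{W}^S v\rangle=\sum_{f,g}\overline{c_f}c_g\, W_S(f,g)=\sum_{f,g}\overline{c_f}c_g\braket{f}{g}_S=\|Uv\|_{\mathcal{H}_S}^{2}\geq 0,
\end{equation*}
so $\mathcal{W}^S$ is positive semi-definite.

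For sufficiency, assume $\mathcal{W}^S\geq 0$. I would use a GNS-style construction: endow $\mathcal{H}^{\oplus}_S$ with the sesquilinear form $\langle v,w\rangle_S:=\langle v,\mathcal{W}^S w\rangle$, which is hermitian (by hermiticity of $W_S$) and positive semi-definite (by hypothesis). Let $\mathcal{N}:=\{v\in\mathcal{H}^{\oplus}_S\mid\langle v,v\rangle_S=0\}=\ker\mathcal{W}^S$, and set $\mathcal{H}_S:=\mathcal{H}^{\oplus}_S/\mathcal{N}$, which inherits a genuine inner product from $\langle\cdot,\cdot\rangle_S$. Define $\ket{f}_S$ to be the class of $\ket{f}$. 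Then by construction $\braket{f}{g}_S=\bra{f}\mathcal{W}^S\ket{g}=W_S(f,g)$, and $\|\ket{f}_S\|^{2}=W_S(f,f)=1$ by \eqref{eq:ind_par_facts}, so each $\ket{f}_S$ is a unit vector, in particular nonzero. Equivalently, one may bypass the quotient by invoking the spectral theorem to factor $\mathcal{W}^S=T^{*}T$ (e.g.\ $T=\sqrt{\mathcal{W}^S}$) and setting $\ket{f}_S:=T\ket{f}\in\operatorname{ran}(T)$, which produces a canonically isometric self-space.

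The main conceptual obstacle is not computational but consists in recognising the correct structure to build on: the right object to examine is not the abstract collection of relative Hilbert spaces $\mathcal{H}_{S|A_i}$ but the single endomorphism $\mathcal{W}^S$ on their orthogonal sum $\mathcal{H}^{\oplus}_S$, so that the question of existence of a self-space becomes a linear-algebraic question about a single hermitian operator. Once this reformulation is in place, both directions follow from standard Gram matrix arguments; the finiteness of $S$ is used only to avoid completing the quotient and to secure the existence of a well-defined square root.
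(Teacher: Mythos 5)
Your proof is correct and follows essentially the same route as the paper's: both directions rest on the characterisation of positive semi-definite operators as Gram matrices, with necessity obtained from the factorisation $\mathcal{W}^S=H^\dagger H$ (equivalently, the non-negativity of the quadratic form) and sufficiency from a square-root/diagonalisation factorisation of $\mathcal{W}^S$. Your GNS-style quotient is just a coordinate-free repackaging of the paper's construction $\mathcal{H}_S=\mathrm{Im}(D)$, and your explicit check that $\|\ket{f}_S\|^2=W_S(f,f)=1$ via \eqref{eq:ind_par_facts} is a welcome detail the paper leaves implicit.
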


\begin{proof}

Assuming that $\mathcal{H}_S $ exists, consider the linear map $H: \mathcal{H}^{\oplus}_S \to \mathcal{H}_S $ defined as $\ket{f} \mapsto \ket{f}_S $. We can compute:
\begin{equation}
    \bra{f}\mathcal{W}^S \ket{g} = W_S (f,g) = \braket{f}{g}_S = \bra{f} H^\dagger H \ket{g},
\end{equation}
where the first equality holds by definition, the second is the assumption on $\mathcal{H}_S$, and the third is the definition of $H$. So we have $\mathcal{W}^S = H^\dagger H$ which means that $\mathcal{W}^S$ is positive semi-definite. In other words, for all vectors $\ket{x}$, $\bra{x} \mathcal{W}^S \ket{x}\geq 0$.

Conversely, assuming that $\mathcal{W}_S$ is hermitian semi-definite positive, it can be unitarily diagonalized into a diagonal matrix $D$ with non-negative coefficients. We then have $\mathcal{W}^S=U^\dagger D U$ where $U$ is a unitary matrix on $\mathcal{H}^{\oplus}_S$. We define $\mathcal{H}_S \overset{\text{def}}= \text{Im}(D)$ and an isometry $P: \text{Im}(D) \to \mathcal{H}^{\oplus}_S $ such that $D'=P^\dagger D P$ is a full rank diagonal matrix with non-negative coefficients and $PD' P^\dagger =D$. Then we can write $D'=\Delta^\dagger \Delta$ with $\Delta \overset{\text{def}}= \sqrt{D'}$. Finally, we have $\mathcal{W}_S =U^\dagger P\Delta^\dagger \Delta P^\dagger U $. Setting $H\overset{\text{def}}= \Delta P^\dagger U$ provides the desired decomposition, i.e. a surjective map $H:\mathcal{H}^{\oplus}_S \to \mathcal{H}_S$ such that for $\ket{f}_S \overset{\text{def}}= H \ket{f}$ we have:
\begin{equation}
    \braket{f}{g}_S =  \bra{f} H^\dagger H \ket{g} = \bra{f} \mathcal{W}_S \ket{g} = W_S (f,g),
\end{equation}
and since the equality holds for arbitrary facts $f$ and $g$, the constructed $\mathcal{H}_S$ is a self-space for $S$. 

\end{proof}

Note that this construction depends on the chosen decomposition $H^\dagger H$ for $\mathcal{W}_S $, which is unique only up to a unitary since $H^{\dagger} U^{\dagger} U H = H^\dagger H$. Hence, the self-space $\Im{H}$ is defined up to unitary transformations, unlike the relative Hilbert spaces that come equipped with preferred bases. Note that the dimension of $\mathcal{H}_S $ is the rank of $\mathcal{W}^S$. Moreover, our convention about parallel facts implies that $\mathcal{W}^S$ has identity diagonal blocks of size $|F_{SA_i }|$. It follows that the dimension of $\mathcal{H}_S $ is bounded by $\min_i |F_{A_i } |$ below and $|F_S |= \sum_i |F_{A_i } |$ above. One can easily embed the relative Hilbert spaces as subspaces of $\mathcal{H}_S $ by restricting the map $H$.

This construction is in line with the relational idea that a system is nothing more than the potentialities of interactions with its environment. Hence, the self-space represents the internal degrees of freedom of the system as emerging from the facts involving it. Let's move to concrete examples.

For the very simple fact-net 
\begin{equation}
   \begin{tikzcd}
A \arrow[r, no head, shift right] \arrow[r, no head] \arrow[r, no head, shift left]
& B
\end{tikzcd} 
\end{equation}
we see that the self-space matches with the relative one
\begin{equation}
    \mathcal{H}_A \cong \mathcal{H}_{A|B}.
\end{equation}
Indeed, the matrix $\mathcal{W}_S $ is then the identity on $\mathcal{H}_S^\oplus$ so that the map $H$ can be any unitary.

As a less simple illustration, consider the setup of two Stern-Gerlach devices as depicted in (\ref{2SG}). We have four facts in $F_S$, so that $\mathcal{H}^{\oplus}_S = \mathbb{C}^4$. The matrix $\mathcal{W}^S$ in the $\{0,1,+,-\}$ basis reads
\begin{equation}
\mathcal{W}^S = 
\begin{bmatrix}
1 & 0 & \frac{1}{\sqrt{2}} & \frac{1}{\sqrt{2}} \\
0 & 1 &\frac{1}{\sqrt{2}} & -\frac{1}{\sqrt{2}} \\
\frac{1}{\sqrt{2}} & \frac{1}{\sqrt{2}} & 1 & 0 \\
\frac{1}{\sqrt{2}} & -\frac{1}{\sqrt{2}} & 0 & 1 \\
\end{bmatrix}
=
\begin{bmatrix}
\mathbb{1} & \mathbb{H} \\
\mathbb{H} & \mathbb{1}
\end{bmatrix} ,
\end{equation}
and upon diagonalization gives $\mathcal{W}^S =  UDU^{\dagger}$ with:
\begin{equation}
D = 2
\begin{bmatrix}
\mathbb{0} & \mathbb{0} \\
\mathbb{0} & \mathbb{1}
\end{bmatrix}
,\quad \textrm{and} \quad
U = \frac{1}{\sqrt{2}}
\begin{bmatrix}
\mathbb{H} & \mathbb{H} \\
-\mathbb{1} & \mathbb{1}
\end{bmatrix} .
\end{equation}
Clearly, the image of $D$ is two-dimensional, so the self-space is $H_S \cong \mathbb{C}^2$. In this simple case, it is isomorphic to both relative Hilbert spaces $\mathcal{H}_{S|\Sigma_x}$ and $\mathcal{H}_{S|\Sigma_z}$, which was to be expected for a chain-complete system.

The matrix $H^{\dagger}$ is here $\frac{1}{2}\begin{pmatrix}
\mathbb{H} & \mathbb{1}
\end{pmatrix}$.
The columns of $H^{\dagger}$ give us the decomposition of the vectors corresponding to the facts of $F_S $ in an orthonormal basis of $\mathcal{H}_S$. Notice that multiplying $H$ by a unitary on the left provides another acceptable representation of the facts in $F_S $.

\section{Amplitude maps}
\label{sec:amplitude_maps}

Consider the fact-net
\begin{equation}
   \begin{tikzcd}
A \arrow[r, no head, shift right] \arrow[r, no head] \arrow[r, no head, shift left]
& S \arrow[r, no head, shift right] \arrow[r, no head] \arrow[r, no head, shift left]
& B 
\end{tikzcd} 
\end{equation}
The amplitude $W_S$ enables to define a linear map, called the \emph{amplitude map}, $W^S_{AB} : \mathcal{H}_{S|B} \to \mathcal{H}_{S|A}$ given on the basis by
\begin{equation}\label{eq:amp_map2}
    W^S_{AB} \ket{b}_B = \sum_{a \in F_{SA}} W_S(a,b)\ket{a}_A.
\end{equation}
This representation of the amplitude provides a novel perspective on the chain property and can be seen as a tool for translating the relative observables between different relative Hilbert spaces. $W^S_{AB}$ is unitary when the chain property is satisfied between $A$ and $B$ with respect to themselves. For a chain-complete system, the amplitude maps are all unitary. As an example, for the pair of Stern-Gerlach devices, fact-net \eqref{2SG}, we have
\begin{equation}
    W^S_{\mathfrak{S}_x\mathfrak{S}_z} = 
\begin{bmatrix}
\frac{1}{\sqrt{2}} & \frac{1}{\sqrt{2}} \\
\frac{1}{\sqrt{2}} & -\frac{1}{\sqrt{2}} \\
\end{bmatrix}.
\end{equation}
For simplicity we now focus our attention on a star-shaped fact-net, with an arbitrary number of systems in $N_S$, although we only draw three below
\begin{equation}\label{diag:star}
\begin{tikzcd}
  & A \arrow[d, no head, shift right] \arrow[d, no head] \arrow[d, no head, shift left]  &   \\
  & S \arrow[ld, no head, shift left] \arrow[ld, no head] \arrow[rd, no head] \arrow[rd, no head, shift right] \arrow[rd, no head, shift right=2] \arrow[rd, no head, shift left] &   \\
B &       & C
\end{tikzcd}.
\end{equation}
The $W_S$ specifies the amplitude maps
\begin{equation}
   W^S_{AB}: \mathcal{H}_B \cong \mathcal{H}_{S|B} \to \mathcal{H}_{S|A} \cong \mathcal{H}_A 
\end{equation}
and so on. They can be all depicted on a single diagram in Hilbert spaces and linear maps
\begin{equation}\label{diag:amp_maps_triangle}
    \begin{tikzcd}
 & \mathcal{H}_A \arrow[rd, "W^S_{CA}"] &               \\
\mathcal{H}_B \arrow[ru, "W^S_{AB}"] \arrow[rr, "W^S_{CB}"'] &   & \mathcal{H}_C.
\end{tikzcd}
\end{equation}
This diagram fully represents the fact-net (\ref{diag:star}). In a way, the system $S$ of the star-shaped diagram is only auxiliary in the sense that all the relevant information -- \emph{the physics} -- is contained in the mutual relations, encoded by the amplitude maps, between the surrounding systems $A,B$ and $C$. The direction of the arrow is reversed by taking the hermitian conjugate since we have $W^A_{BC} = (W^A_{CB})^\dagger$.

\passage{Chain property as commutativity}

Finite chain-complete systems in star-shaped fact-nets are neatly characterized by the commutativity of the diagram of the amplitude maps, as the following Lemma shows.

\begin{lemma}[Chain-completeness as commutativity]\label{lem:com}
A finite system S in a star-shaped diagram is chain-complete if and only if all triangles composed from the amplitude maps are commutative. In such a case, all these maps are unitary.
\end{lemma}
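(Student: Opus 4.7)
My plan is to unpack the definition \eqref{eq:amp_map2} of the amplitude maps so as to recognise that commutativity of a triangle of amplitude maps is \emph{literally} the chain property written at the operator level. Computing both sides on a basis vector $\ket{b}_B$, one finds
\begin{equation*}
    W^S_{AC}\,W^S_{CB}\ket{b}_B \;=\; \sum_{a\in F_{SA}}\Bigl(\sum_{c\in F_{CS}} W_S(a,c)\,W_S(c,b)\Bigr)\ket{a}_A,
\end{equation*}
while $W^S_{AB}\ket{b}_B=\sum_{a} W_S(a,b)\ket{a}_A$. Equating these two expressions for every $b\in F_{SB}$ is exactly the chain property \eqref{eq:chain_prop} for $A$ and $B$ with respect to $C$. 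Letting $(A,B,C)$ range over all triples in $N_S$ then yields both directions of the equivalence simultaneously, with no extra argument.

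For the unitarity clause, I would specialise the chain property to a ``loop'' triangle with $A=B$. For any $C\in N_S$ and $a,a'\in F_{SA}$ it then reads $W_S(a,a')=\sum_{c} W_S(a,c)\,W_S(c,a')$; but the parallel-fact normalisation \eqref{eq:ind_par_facts} forces the left-hand side to equal $\delta_{aa'}$, so the identity is precisely $W^S_{AC}W^S_{CA}=\mathbb{1}_{\mathcal{H}_{S|A}}$. Swapping the roles of $A$ and $C$ gives $W^S_{CA}W^S_{AC}=\mathbb{1}_{\mathcal{H}_{S|C}}$, and combined with the hermiticity identity $W^S_{CA}=(W^S_{AC})^\dagger$ already recorded in the excerpt, together with the finite-dimensionality of each relative Hilbert space (which follows from $S$ being finite), this forces $W^S_{AC}$ to be unitary.

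The only mild subtlety to flag is that ``all triangles'' has to be read as including the degenerate case where two of the three vertices coincide, and equivalently that chain-completeness is being applied to triples with possibly $A=B$; this is what supplies the loop case required for unitarity. Granted that reading, the entire argument is a direct transcription between the index-level identity on $W_S$ and the operator-level identity on the relative Hilbert spaces, and there is no genuine obstacle to expect beyond keeping the bookkeeping of bases straight.
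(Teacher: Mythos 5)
Your proposal is correct and follows essentially the same route as the paper: both unpack the amplitude maps on the basis $\{\ket{b}_B\}$ (the paper writes them as sums of dyads) to identify triangle commutativity with the chain property, and both obtain unitarity from the degenerate triangle in which two outer systems coincide, combined with the parallel-fact normalisation $W_S(a,a')=\delta_{aa'}$ and the hermiticity relation $W^S_{CA}=(W^S_{AC})^\dagger$. The ``mild subtlety'' you flag --- that chain-completeness must be read as covering triples with repeated systems --- is likewise assumed implicitly by the paper, which sets $B=C$ in its composition identity to get $(W^S_{AB})^\dagger W^S_{AB}=\mathrm{Id}$.
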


\begin{proof}
We calculate $W^S_{CA} \circ W^S_{AB}$ and show that it equals $W^S_{CB}$ if and only if the chain property holds for $C$ and $B$ with respect to $A$. We have
\begin{equation}
    \begin{split}
     W^S_{AB} &= \sum_{a,b}W_S(b,a) \, {}_A\!\dyad{a}{b}_B,\\
     W^S_{CA} &= \sum_{c,a'}W_S(a',c) \, {}_C\!\dyad{c}{a'}_A,
\end{split}
\end{equation}
and we can calculate
\begin{align*}
   W^S_{CA} \circ W^S_{AB} &= \sum_{a,a',b,c} W_S(b,a)W_S(a',c){}_C\!\ket{c}\braket{a'}{a}\bra{b}_B \\
   &= \sum_{a,b,c} W_S(b,a)W_S(a,c){}_C\!\dyad{c}{b}_B \\
   &= \sum_{b,c} W_S(b,c){}_C\!\dyad{c}{b}_B = W^S_{CB},
\end{align*}
where we have first used orthogonality of the $\{\ket{a}\}$ basis, and then the chain property for $C$ and $B$ with respect to $A$. Clearly, the equality can be accomplished only if the third equality holds, i.e. this chain property can be used. Generalization to arbitrary number of systems surrounding $S$ is straightforward.

To see that the amplitude maps need to be unitary, take $B=C$ in the calculation above. We get
\begin{align}\label{eq:commute}
   (W^S_{AB})^\dagger \circ W^S_{AB} = \sum_{b,b'} W_S(b,b'){}_C\!\dyad{b'}{b}_B = Id_{\mathcal{H}_B}.
\end{align}

The other composition works the same way, the calculation can be done for any amplitude map. We can then conclude that they are all unitary maps.
\end{proof}

\passage{Translating relative observables}

In section \ref{sec:recovering}, we have shown that a system $B \in N_S$ can be expressed as a self-adjoint operator $\hat B$ on $\mathcal{H}_{S|A}$, where $A \in N_S$. To make explicit that $\hat B$ lives in $\mathcal{L}(\mathcal{H}_{S|A})$, let's add an index $A$, so that \eqref{eq:obs} becomes
\begin{equation}
    \hat{B}_A \overset{\text{def}}{=} \sum_{b \in F_{SB}} f(b) \, {}_A\!\dyad{b}_A.
\end{equation}
$\hat B_A$ is the observable $B$ from the perspective of $A$. Given another system $C \in N_S$, one can also consider $\hat B_C$, the observable $B$ from the perspective of $C$. Then, if the chain property is satisfied between $B$ and $C$ with respect to $A$, the amplitude map $W^S_{AC}$ enables to relate $\hat B_A$ and $\hat B_C$ as
\begin{equation}
\label{eq:BA_to_BC}
     \hat B_C = W^S_{CA} \hat B_A W^S_{AC}
\end{equation}

If $A$ and $C$ represent an observer at two different times, then $W^S_{AC}$ is unitary and equation \eqref{eq:BA_to_BC} shows the usual evolution of an observable $B$ in the Heisenberg picture.

Within the usual formulation of quantum mechanics, all the relative Hilbert spaces of a chain-complete system are identified and understood as ``the Hilbert space of the system''. For example, in a multiple Stern-Gerlach setup, ``the Hilbert space of $S$'' is said to be just $\mathbb{C}^2$, without any specific choice of basis or a reference system. In such a case all the relative Hilbert spaces are isomorphic to the self-space $\mathcal{H}_S$, which has no preferred basis, so we can say that the self-space $\mathcal{H}_S$ is what is usually considered as ``the Hilbert space of the system''. Then we consider different operators, all acting on $\mathcal{H}_S$, that correspond to different directions of the spin measurement in this simple example. Each of them can be diagonalized by a unitary change of basis.

In our framework, we can also work in the self-space of the system. Assuming chain-completeness, all the amplitude maps can be considered as endomorphisms of the self-space, and each basis of eigenvectors of a relative observable can be considered as a basis of $\mathcal{H}_S$, thanks to the inclusions $\phi_{SA}:\mathcal{H}_{S|A} \to \mathcal{H}_S$, which are now unitary maps. Under such identification, the amplitude maps are precisely the unitary transformations that one needs to perform to change the basis of $\mathcal{H}_S$ from that associated via relevant $\phi$ to the eigenvectors of, e.g., $\hat{A}_A$, to that of eigenvectors of $\hat{B}_B$. To see this, notice that $\ket{b}_A = W_{AB}\ket{b}_B$, which are now considered as vectors in $\mathcal{H}_S$. We then have
\begin{align*}
    \hat{B}_A &= \sum_{b \in F_{SB}} f(b)  {}_A\!\dyad{b}_A \\
    &= \sum_{b \in F_{SB}} f(b)  W_{AB}\ket{b}_{BB}\bra{b} W_{BA}\\
    &= W_{AB} \hat B_B W_{BA},
\end{align*}
which is a special case of the formula \eqref{eq:BA_to_BC}.

\passage{More commutative triangles}

Each relative Hilbert spaces $\mathcal{H}_{S|A}$ is included as a subspace of the self-space $\mathcal{H}_S$. This is captured by an \emph{isometry} $\phi_{SA}: \mathcal{H}_{S|A} \to \mathcal{H}_S$, such that
\begin{equation}
    \phi_{SA} \overset{\text{def}}{=} H\big|_{\mathcal{H}_{S|A}}: \ket{a}_A = \ket{a}_S,
\end{equation}
where the notation from the Lemma \ref{lem:rep} is used. The diagram (\ref{diag:amp_maps_triangle}) can then be completed to include also the self-space of the system $S$, so that the fact-net (\ref{diag:star}) is represented as a following diagram of Hilbert spaces and linear maps
\begin{equation}\label{startriangles}
\begin{tikzcd}
& \mathcal{H}_A \arrow[d, "\phi_{SA}"] \arrow[rdd, "W^S_{CA}", bend left] &        \\
 & \mathcal{H}_S   &    \\
\mathcal{H}_B \arrow[ru, "\phi_{SB}"] \arrow[ruu, "W^S_{AB}", bend left] &   & \mathcal{H}_C \arrow[lu, "\phi_{SC}"'] \arrow[ll, "W^S_{BC}"', bend left]
\end{tikzcd}.
\end{equation}

We always consider the maps in both directions, they come in pairs: a map and its hermitian conjugate. This way we get a bunch of commutative triangles. To see this, let us calculate
\begin{align*}
    (\phi_{SA})^\dagger \circ \phi_{SB} &= \sum_{a,b}{}_{A}\!\dyad{a}_S\dyad{b}_B\\
    &= \sum_{a,b} {}_{A}\!\ket{a} \braket{a}{b}_S \bra{b}_B\\
    &= \sum_{a,b} {}_{A}\!\ket{a} W_S(a,b)\bra{b}_B = W^S_{AB}.
\end{align*}
If $S$ is chain-complete, we also have
\begin{align*}
    \phi_{SA} \circ W^S_{AB} &= \sum_{a,a',b}W_S(a,b){}_S\!\dyad{a'}_A {}_A\dyad{a}{b}_B \\
    &= \sum_{a,b}W_S(a,b){}_S\!\dyad{a}{b}_B \\
    &= \sum_b {}_S\!\dyad{b}_B = \phi_{SB},
\end{align*}
It is compatible with the Lemma \ref{lem:com}, since if all such triangles commute, so does the big ones consisting of the amplitude maps only.

For a chain-complete $S$ all the Hilbert spaces on (\ref{startriangles}) are isomorphic. Indeed, $\mathcal{H}_{S|A} \cong \mathcal{H}_A$ since $A$ is not related to any other system, but also $\mathcal{H}_S \cong \mathcal{H}_{S|A}$ due to the chain-completeness. Hence, in such a case, all the isometries $\phi$'s need to be onto, so in fact, they are all unitary. Without this assumption, they are isometries for star-shaped fact-nets and partial isometries in general.

\passage{Maps as states}

The linear maps introduced above can be interpreted as states on the composite systems via the Choi-Jamiołkowski isomorphism. For example, the map $\phi_{SA}$ can be thought of as representing the correlation between the measuring apparatus $A$ and the quantum system $S$. Indeed, being unitary as maps, the states corresponding to them are maximally entangled
\begin{equation}
     \ket{\phi}_{SA} = \sum_{a \in F_{SA}}\ket{a}_A \otimes \ket{a}_S.
\end{equation}

The amplitude maps can also be understood similarly. For example, the state corresponding to the amplitude map $W^S_{AB}$ reads
\begin{equation}
    \begin{split}
         \ket{W}^S_{AB} &= \sum_{b \in F_{AB}}\ket{b}_B \otimes W^S_{AB}\ket{b}_B \\
         &= \sum_{a,b} W_S(b,a) \ket{b}_B \otimes \ket{a}_A
    \end{split}
\end{equation}
The amplitude, that expresses a correlation between the fact-sets $F_{SA}$ and $F_{SB}$, is here captured as an entangled state on $\mathcal{H}_{S|A} \otimes \mathcal{H}_{S|B}$ relative to the system $S$. Indeed, the amplitude map is exactly the set of equations relating the basis vectors of the relative Hilbert spaces, like in eq.\eqref{ketb}. The interpretation of the amplitude as a kind of relative state will be illustrated further below in the context of quantum reference frames.

\section{Quantum reference frames}
\label{sec:QRF}

So far, our examples were focused on star-shaped fact-nets. Such fact-nets enable a clear distinction between the centre and the surrounding systems. The centre is either seen as a quantum system surrounded by classical measurement devices or alternatively, the centre is a classical observer, surrounded by quantum systems. Such cases encompass most situations that were so far experienced in laboratories.  However, if one believes that any quantum system can serve as a reference system, one is led to also consider different situations, like the following triangle-shaped fact-net:
\begin{equation}\label{diag:qrf}
\begin{tikzcd}
 & A \arrow[rd, no head] \arrow[rd, no head, shift right] \arrow[rd, no head, shift left] &   \\
B \arrow[ru, no head, shift right] \arrow[ru, no head] \arrow[ru, no head, shift left] \arrow[rr, no head] \arrow[rr, no head, shift right] \arrow[rr, no head, shift left] &              & C
\end{tikzcd}
\end{equation}
Such fact-nets open up the possibility to compare different perspectives, which is the subject of quantum reference frames (QRF). A concrete question to ask is: given a state for the joint system $BC$ relative to $A$, what can be said about the state of $AB$ relative to $C$?

Although the subject is already old, the interest in it was renewed in \cite{giacomini2019} where a concrete proposal was made to describe a change of quantum reference frame. The work was later extended in \cite{delaHamette2020quantumreference} with a group-theoretic formalism. To address the issue in the fact-net formalism, let's first provide an example taken from \cite{delaHamette2020quantumreference}.

\passage{Example: three qubits}

Let's consider three qubits, like three 1/2-spins. In standard quantum mechanics, one would describe their states as up $\ket{\uparrow}$ or down $\ket{\downarrow}$. But in relational quantum mechanics, such an orientation up/down is relative to another orientation, so that the relative facts are `aligned' or `anti-aligned' that we denote respectively 0 and $\pi$. The fact 0 could correspond to $\uparrow \uparrow$ or $\downarrow \downarrow$. So we have the following triangle fact-net:
\begin{center}
\begin{tikzcd}
 & A \arrow[rd, "0"', no head, shift right] \arrow[rd, "\pi", no head, shift left] &   \\
B \arrow[ru, "0"', no head, shift right] \arrow[ru, "\pi", no head, shift left]  \arrow[rr, "0"', no head, shift right] \arrow[rr, "\pi", no head, shift left] &  & C
\end{tikzcd}
\end{center}

On the one hand, the amplitude $W_A$ gives a state of $BC$ relative to $A$. Take for instance
\begin{equation}
\label{eq:WA}
    \ket{W}^A = (\ket{0}^A_B + \ket{\pi}^A_B) \otimes \ket{0}^A_C. 
\end{equation}
It can also be written as an amplitude map $W^A_{BC} : \mathcal{H}_{C|A} \to \mathcal{H}_{B|A}$:
\begin{equation}
    W^A_{BC} = 
    \begin{pmatrix}
    1 & 0 \\
    1 & 0
    \end{pmatrix}
\end{equation}

On the other hand, the amplitude $W_B$ gives the state of $AC$ relative to $B$. By a sort of transitivity argument and a linearity principle, the QRF literature argues that $W_B$ is not free, but constrained and deducible from $W_A$. The argument is as follows:
\begin{enumerate}
    \item A state $\ket{0}^A_B \ket{0}^A_C$ means that $B$ is aligned with $A$ which is itself aligned with $C$, so that $B$ is aligned with $C$ and so the corresponding state relative to $B$ is $\ket{0}^B_A \ket{0}^B_C$.
    \item Similarly, a state $\ket{\pi}^A_B \ket{0}^A_C$, implies that $B$ is anti-aligned with $C$, so that the corresponding state relative to $B$ is $\ket{\pi}^B_A \ket{\pi}^B_C$.
    \item By requiring linearity, the state $W_B$, deduced from \eqref{eq:WA}, must be
\begin{equation}
\label{eq:WB}
 \ket{W}^B =  \ket{0}^B_A \ket{0}^B_C + \ket{\pi}^B_A \ket{\pi}^B_C. 
\end{equation}
\end{enumerate}
As a map the latter reads:
\begin{equation}
    W^B_{AC} = 
    \begin{pmatrix}
    1 & 0 \\
    0 & 1
    \end{pmatrix}.
\end{equation}
The surprising conclusion of this first computation is that a system $B$, in a superposed state relative to $A$ (eq. \eqref{eq:WA}), perceives $A$ in a state entangled with the rest of the world (eq. \eqref{eq:WB}). Similarly, one can deduce the view from $C$:
\begin{equation}
     \ket{W}^C =  \ket{0}^C_A \otimes (\ket{0}^C_B + \ket{\pi}^C_B). 
\end{equation}
or as a map
\begin{equation}
    W^C_{AB} = 
    \begin{pmatrix}
    1 & 1 \\
    0 & 0
    \end{pmatrix}.
\end{equation}
The consistency of the three views is expressed by the relation
\begin{equation}
    W^A_{BC} = W^C_{BA} W^B_{AC}
\end{equation}
which means the commutation of the triangle
\begin{center}
\begin{tikzcd}
  \mathcal{H}_{A|B} \cong \mathcal{H}_{B|A}  &     & \arrow[ll, "W^A_{BC}"']  \mathcal{H}_{C|A} \cong \mathcal{H}_{A|C} \arrow[ld, "W^C_{BA}"] \\
&  \mathcal{H}_{B|C} \cong \mathcal{H}_{C|B} \arrow[ul, "W^B_{AC}"]    & 
\end{tikzcd}
\end{center}
With the Choi-Jamilkowski isomorphism $\mathbf{CJ}$, the relation reads
\begin{equation}
    \ket{W}^B = \mathbf{CJ}[\mathbf{JC}(\ket{W}^A) \circ \mathbf{JC}( \ket{W}^C )].
\end{equation}

\passage{Change of QRF}

Let us generalize the example by considering a triangle fact-net (ABC) with general fact-sets. Suppose that $W^A$ gives the state
\begin{equation}
    \ket{f}^A_B \ket{h}^A_C
\end{equation}
with $f \in F_{AB}$ and $h \in F_{AC}$. The fact $f$ is common to $A$ and $B$, but it can be viewed from two perspectives as $\ket{f}^A_B$ or $\ket{f}^B_A$. So, when we switch to the perspective of $B$, the relative state becomes
\begin{equation}
    \ket{f}^B_A \ket{g}^B_C
\end{equation}
with some $g \in F_{BC}$. 

We assume that the fact $g$ is deducible from $f$ and $h$. Therefore, there must be a function $\mathcal{G}_A$ such that $g = \mathcal{G}_A(h,f)$. Similarly, there must exist a function $\mathcal{G}_B(f,g)$ and $\mathcal{G}_C(g,h)$. For the consistency of the picture, when we switch to the view of $C$, we must recover the fact $h$, so we must have conditions such as
\begin{equation}
    \label{eq:consistency}
    f = \mathcal{G}_C(g,\mathcal{G}_B(f,g)).
\end{equation}

\passage{Fact-groups}

In \cite{delaHamette2020quantumreference}, a concrete proposal was made to specify the operators $\mathcal{G}_A, \mathcal{G}_B, \mathcal{G}_C$ above. It is done in the special case where the fact-sets all carry the same group structure $G$. More precisely, the group structure is given by bijective maps which send facts to group elements:
\begin{equation}
    \mathcal{I}^A_B : F_{AB} \to G.
\end{equation}
The maps satisfy the property that $\mathcal{I}^B_A(f) = \mathcal{I}^A_B(f)^{-1}$. So, one fact $f \in F_{AB}$ gives rise to two elements of the group, $\mathcal{I}^B_A(f)$ and $\mathcal{I}^A_B(f)$, inverse to each other, which can be understood as transformations to match the perspectives of $A$ to $B$, and vice versa.

Then, all relative Hilbert spaces are isomorphic to $L^2(G)$. In the three qubits example above, the group was $Z_2$ and the Hilbert space $\mathbb{C}^2$. By convention, one associates the neutral element $e$ to the fact between a system and itself. So for instance, a factorized state relative to $A$ reads
\begin{equation}
    \ket{e}^A_A \ket{f}^A_B \ket{h}^A_C.
\end{equation}
Then, the view from $B$ must be of the form
\begin{equation}
    \ket{f}^B_A \ket{e}^B_B \ket{g}^B_C,
\end{equation}
where $g$ must be a function of $f$  and $h$. In \cite{delaHamette2020quantumreference}, the following transformation is advocated:
\begin{equation}
   g = \left(\mathcal{I}^B_C\right)^{-1}( \mathcal{I}^A_C(h) \cdot \mathcal{I}^B_A(f) ).
\end{equation}
This defines the change of QRF from $A$ to $B$ on a basis. By linearity, it suffices to define a general operator
\begin{equation}
    U^{A\to B} : \mathcal{H}_{B|A} \otimes \mathcal{H}_{C|A}  \to  \mathcal{H}_{A|B} \otimes \mathcal{H}_{C|B} 
\end{equation}
such that
\begin{equation}
     U^{A\to B} \left( \ket{f}^A_B \ket{h}^A_C \right) = \ket{f}^B_A \ket{\left(\mathcal{I}^B_C\right)^{-1}( \mathcal{I}^A_C(h) \mathcal{I}^B_A(f) )}^B_C.
\end{equation}

The notation used here is different from that used in \cite{delaHamette2020quantumreference}. Indeed, the variable $g$ that labels the ket $\ket{g}$ has a slightly different meaning. For us, $\ket{g}^A_B$ must become $\ket{g}^B_A$ because $g$ is a fact shared symmetrically by $A$ and $B$. In \cite{delaHamette2020quantumreference} instead, $\ket{g}^A_B$ must correspond to $\ket{g^{-1}}^B_A$ because $g$ labels a transformation from $A$ to $B$. The relation between the two notations is given by the maps $\mathcal{I}^A_B$ which turn facts into transformations.

\section{Conclusion}

In this paper, we introduced the fact-nets formalism as a proposal for a mathematical framework fitted to the relational interpretation of quantum mechanics. The main points to take from the fact-nets are:
\begin{itemize}
    \item Hilbert spaces are not postulated primarily but derived from relative facts. They come equipped with a preferred basis. All properties of systems are relational and derived from relative facts involving them.
    \item The relations between systems are symmetric: no fundamental distinction between an observer and the observed is being made.
    \item The same fact-net with the same amplitude can describe very different experimental situations, depending on which system is taken as the observer. So fact-nets enable to express and reveal formal analogies.
    \item The correlation between facts is encoded in complex amplitudes. They are used to compute probabilities, which are always conditional, by design.
    \item Measurement appears as a restriction of fact-sets accompanied by a local ordering of systems. It is understood as a restriction of a theoretical fact-net to adjust for the physical setup that is being realized in time, in the lab.
    \item A fact-net of finite systems can be represented as a diagram with finite-dimensional Hilbert spaces and maps understood as entangled states.
    \item Relative states, quantum reference frame and changes of perspectives are natural concepts within the formalism.
\end{itemize}

We only sketched here the premiss of a theory of fact-nets. It motivates some lines of research to be developed in the short or medium term:
\begin{itemize}
    \item We envisage that time can be accommodated in the fact-net formalism as a labelling for nodes that connect to a system, each connection corresponding to different instance the system interacts with its surroundings. (This was illustrated in the propagator example,  Eq.~\ref{eq:propagator_3}.)

    \item The Wigner's friend scenario, non-locality, and other quandaries should be formulated and analysed in the fact-net framework, hoping some light could be shed on them. 

    \item Our focus on facts between systems instead of the systems themselves has a clear category-theoretic inspiration. Further investigations toward a representation theory of fact-nets as diagrams in the category of Hilbert spaces and partial isometries are currently carried out in the hope to close the gap with previous attempts of Yang \cite{yang2018} to replace quantum states by relation matrices between states.
    
    \item Our construction of the self-space of a system has only been thoroughly presented in the finite case, and still has to be extended to infinite fact-sets equipped with measurable structures.
    
    \item The information-theoretic flavour of the relational interpretation is not yet reflected in the fact-net formalism. However, one can already see how the structure of a fact-net allows systems to acquire varying amounts of information on one another. A formal correspondence involving von Neumann entropy still has to be figured out.
    
    \item Through examples, we have already seen that quantum reference frames are a promising field of application of fact-nets. Extensions of fact-nets to situations involving gauge symmetries are under development, and the links with the perspective neutral approach \cite{de2021perspective} and the measurement-theoretic setup of \cite{Loveridge2016RelativityOQ} are investigated. Dropping the assumption of the incompatibility of parallel facts seems a promising move towards encompassing general POVM and imperfect quantum reference frames.
    
    \item Finally, the emergence of classicality and causality in fact-nets still has to be clarified, notably by a more in-depth study of the chain property.
    
\end{itemize}

By developing fact-nets, our main aim is to clarify the debate on the relational interpretation, allowing it to take place on a more formalized ground. We hope that this framework will be as relevant for those trying to sharpen devastating no-go theorems for RQM as for those who desire to establish the relational interpretation as a matter of facts.

\begin{acknowledgements}

This research stems from discussions that occurred at the Sejny Summer Institute in July 2021, organized by the Basic Research Community for Physics. We must mention the essential contribution of Leon Loveridge and Anne-Catherine de la Hamette, and the helpful discussions with Andrea Di Biagio.

We also thank Alexandra Elbakyan for her help in accessing the scientific literature.

PMD is supported by the ID\# 61466 grant from the John Templeton Foundation, as part of the project \textit{Quantum Information Structure of Spacetime} (QISS). (\hyperlink{https://www.qiss.fr}{qiss.fr}). The opinions expressed in this publication are those of the authors and do not necessarily reflect the views of the John Templeton Foundation.

JG is supported by the Polish National Science Centre (NCN) through the OPUS Grant No. 2017/27/B/ST2/02959.

This work was partly supported by the Czech Science Foundation,
Grant No. GA\v{C}R 19-15744Y.

\end{acknowledgements}

\bibliographystyle{scipost}
\bibliography{facts.bib}

\end{document}